\newcommand{\RR}{\mathbb{R}}
\newcommand{\RA}{\bar{\mathbb{Q}}}
\newcommand{\QQ}{\mathbb{Q}}
\newcommand{\NN}{\mathbb{N}}
\newcommand{\PP}[1]{\QQ[x_1,{\ldots},x_{#1}]}
\newcommand{\res}[3]{%
    \textup{res}_{x_{#3}}[#1,#2]%
}
\newcommand{\disc}[2]{%
    \textup{disc}_{x_{#2}}[#1]%
}
\newcommand{\coeff}[2]{%
    \textup{coeff}_{x_{#2}}[#1]%
}
\newcommand{\ldcf}[2]{%
    \textup{ldcf}_{x_{#2}}[#1]%
}
\newcommand{\dg}[2]{%
    \textup{deg}_{x_{#2}}[#1]%
}
\newcommand{\rex}[3]{%
    \textup{root}_{#3}[#1,#2]%
}
\newcommand{\ires}[1]{%
    \textup{IRE}(#1)%
}
\newcommand{\key}[1]{%
    \emph{#1}%
}
\newcommand{\apxscc}{%
    \texttt{apx-scc}%
}
\newcommand{\smt}{{\footnotesize SMT}\xspace}
\newcommand{\smtlib}{{\footnotesize SMT-LIB}\xspace}
\newcommand{\smtrat}{{\footnotesize SMT-RAT}\xspace}
\newcommand{\nra}{{\footnotesize NRA}\xspace}
\newcommand{\qfnra}{{\footnotesize QF-NRA}\xspace}
\newcommand{\cad}{{\footnotesize CAD}\xspace}
\newcommand{\nlsat}{{\footnotesize NLSAT}\xspace}
\newcommand{\mcsat}{{\footnotesize MCSAT}\xspace}
\definecolor{aux}{rgb}{0.99,0.78,0.07}
\definecolor{auxG}{rgb}{0.6,0.6,0.61}
\definecolor{auxLG}{rgb}{0.85,0.85,0.86}
\definecolor{auxOne}{rgb}{0,0,0.86}
\definecolor{auxTwo}{rgb}{0,0,0.61}
\definecolor{auxThree}{rgb}{0,0,0.31}
\colorlet{pOne}{blue!80!black}
\colorlet{pTwo}{green!70!black}
\colorlet{pThree}{teal}
\newcommand{\polyP}[1]{
    \draw[#1] (0,0) circle (1);
}
\newcommand{\polyQ}[1]{
    \draw[domain=-1.5:1.5, samples=50, #1] plot ({\x},{1.5*\x*\x*\x})
}
\newcommand{\polyR}[1]{
    \draw[domain=-1.5:1.3, samples=50, #1] plot ({\x},{0.5-0.5*\x+(1/(256*(\x-1.4)*(\x-1.4)*(\x-1.4)))})
}
\newcommand{\coordinateBox}{
    \draw[black, very thin] (-1.3,-1.15) rectangle (1.3,1.3);
    \node (x1) at (1,-1.3) {$x_1$};
    \node (x2) at (-1.45,1) {$x_2$};
}
\newcommand{\setupExample}{
    \begin{scope}
        \clip (-1.3,-1.15) rectangle (1.3,1.3);
  
        \polyP{ultra thick, pTwo, opacity=0.5};
        \polyQ{ultra thick, pThree, opacity=0.5};
        \polyR{ultra thick, pOne, opacity=0.5};

        \node[pTwo!70!black] (p2) at (0.4, 1.15) {$p_2$};
        \node[pThree] (p3) at (1.1,1.15) {$p_3$};
        \node[pOne!90!black] (p1) at (-0.9,1.15) {$p_1$};
      \end{scope}
}
\newcommand{\highlighted}[2]{
    \begin{scope}
        \clip (#1, -1.3) rectangle (#2, 1.3);
        \polyP{line width=2pt, pTwo, opacity=1};
        \polyQ{line width=2pt, pThree, opacity=1};
        \polyR{line width=2pt, pOne, opacity=1};
    \end{scope}
}
\newcommand{\exactS}{
    \begin{scope}
        \clip (0,0) circle (1);
        \fill[gray!50, opacity=1, domain=-0.76:0.54]
        (-0.76,-1.3) -- (-0.76,-0.65) -- plot ({\x},{1.5*\x*\x*\x}) -- (0.54,0.23) -- (0.54,-1.3) --cycle;
        \node[gray!50!black] (S) at (0,-0.5) {$S$};
    \end{scope}
}
\newcommand{\approxS}[3][magenta]{
    \begin{scope}
        \clip (0,0) circle (1);
        \fill[#1, opacity=0.33] (#2,-1.3) rectangle #3;
      \end{scope}
}
\newcommand{\projection}[3][dashed]{
    \draw[thick, #1, gray] (#2, #3) -- (#2, -1.15);
}
\newcommand{\sample}[3]{
    \node[blue, circle, fill, inner sep=1.5pt, label={[above left,yshift=-2]:#1}] (s) at (#2,#3) {};
}
\newcommand{\rootPoint}[2]{
    \node[diamond,fill,inner sep=1.5pt,label={#1}] at #2 {};
}
\newcommand{\rootLabel}[3][]{
    \node[label={[#1, below left=0]:#2}] at #3 {};
}
\newcommand{\rootLabelUp}[3][]{
    \node[label={[#1, above left=0]:#2}] at #3 {};
}
\newcommand{\symInt}[4][]{
    \draw[decorate, decoration={brace,amplitude=5pt,mirror,raise=2pt}, thick] #3 -- #4 node [#1, midway] {#2};
}
\begin{document}
\title{More is Less: Adding Polynomials for Faster Explanations in NLSAT}
\titlerunning{Adding Polynomials for Faster NLSAT Explanations}
%
\author{Valentin Promies\inst{1}\orcidID{0000-0002-3086-9976}\Letter \and
Jasper Nalbach\inst{1}\orcidID{0000-0002-2641-1380} \and
Erika Ábrahám\inst{1}\orcidID{0000-0002-5647-6134} \and
Paul Wagner\inst{1}}
\authorrunning{V. Promies et al.}
%
\institute{RWTH Aachen University, Aachen, Germany\\
\email{\{promies, nalbach, abraham\}@cs.rwth-aachen.de}}

\maketitle

\begin{abstract}
To check the satisfiability of (non-linear) real arithmetic formulas, modern satisfiability modulo theories (\smt) solving algorithms like \nlsat depend heavily on \emph{single cell construction}, the task of generalizing a sample point to a connected subset (cell) of $\RR^n$, that contains the sample and over which a given set of polynomials is sign-invariant.

In this paper, we propose to speed up the computation and simplify the representation of the resulting cell by dynamically extending the considered set of polynomials with further linear polynomials.
While this increases the total number of (smaller) cells generated throughout the algorithm, our experiments show that it can pay off when using suitable heuristics due to the interaction with Boolean reasoning.
\end{abstract}

\keywords{%
  Single Cell Construction \and
  Cylindrical Algebraic Decomposition \and
  Real Algebra \and
  \smt Solving
}

\section{Introduction}\label{sec:intro}

\emph{Satisfiability checking} deals with the problem of deciding whether a first-order logic formula admits a solution. 
\emph{Satisfiability modulo theories (\smt)} solvers use specialized algorithms to tackle this problem for different theories.
While the targeted problems are generally hard (NP-complete for propositional logic, and even undecidable for  integer arithmetic), modern \smt solvers are highly efficient and widely used as integrated engines, e.g. for automated deduction \cite{sledgehammer,aprove}.

In this paper, we focus on the \emph{quantifier-free} fragment of \emph{non-linear real arithmetic (\nra)}, denoted as \emph{\qfnra}, whose formulas are Boolean combinations of polynomial constraints with rational coefficients and real-valued variables.
The \emph{cylindrical algebraic decomposition (\cad)} method \cite{DBLP:conf/automata/Collins75}, which is in general a quantifier elimination procedure for \nra,
was the first tractable technique for solving the satisfiability problem for \qfnra.
A \cad partitions the search space of the variables into a finite number of \emph{cells},
such that all polynomials in the input formula are sign-invariant - and thus the input formula is truth-invariant - within each \cad cell. Consequently, we can decide the satisfiability problem by checking one point from each cell.
Despite major improvements by, e.g., McCallum \cite{McCallum85,McCallum99}, Lazard \cite{Lazard1994}, and Brown \cite{Brown01a}, \cad still scales poorly, often due to expensive \emph{resultant} computations.

Inspired by \cad, one of the most successful \smt-solving techniques for \qfnra is \emph{\nlsat} \cite{JovanovicM12}, an instance of the \emph{model constructing satisfiability calculus (\mcsat)} \cite{MouraJ13} by Jovanic and de Moura.
\nlsat extends {\footnotesize DPLL+CDCL}-style propositional reasoning with a dual approach for the theory. The propositional part consists of deciding truth values for constraints, Boolean propagation, and Boolean conflict resolution. Dually, the theory part decides real values for theory variables, accompanied by theory propagation to assure that theory assignments evaluate constraints \emph{consistently} with the Boolean assignment, and theory conflict resolution.
Steps from both parts are interleaved, maintaining consistent partial assignments to guide each other towards a solution.

If all possible values for an unassigned theory variable would contradict the Boolean assignment of some constraint, then the current theory assignment cannot be extended consistently.
This situation is resolved using an \emph{explanation} backend, which generalizes the conflict's reason to a  lemma (called explanation) that not only excludes the current conflicting assignment but also further similar situations from the future search.
For example, given the formula $(x_1^2 + x_2^2 - 1 < 0\ \lor\ x_2 > 0)$, \nlsat could decide the first constraint to be true, and then assign $x_1 = 2$.
Now, no value of $x_2$ would satisfy $x_1^2 + x_2^2 - 1 < 0$, as it simplifies to $x_2^2 < -3$ under $x_1 = 2$. However, if $x_1^2 + x_2^2 - 1 < 0$ holds, then any value $x_1>1$ would require $x_2^2$ to be negative.
Thus we can generalize the value $x_1= 2$ to $x_1 > 1$ by learning the lemma $(\lnot(x_1^2 + x_2^2 - 1 < 0)\ \lor\ x_1 \leq 1)$.

Explanations should be efficiently computable and generalize as strongly as possible.
Note that the learnt clause may contain literals not present in the input formula (like $x_1 < 1$ above), thus the generalization technique is crucial for completeness.
\nlsat uses \cad techniques to generalize a sample point to a single cell around it.
However, \nlsat computes cells \emph{locally} w.r.t. a sample and a subset of the constraints, which offers potential for simpler computations and larger cells.
While improvements of Brown and Košta \cite{BrownK15}, Li and Xia \cite{samplecell}, and Nalbach et al. \cite{NALBACH2024102288} use this potential and avoid certain resultant computations,
single cell construction remains a major factor for the running time.

\paragraph*{Contributions.}
The cost of computing a resultant of two polynomials depends on their degree. If one of the polynomials is linear, the resultant is generally cheap to compute. In this paper, we dynamically insert linear polynomials during the cell construction from \cite{NALBACH2024102288}, effectively under-approximating the bounds on the cell.
This reduces the effort of the construction by replacing expensive resultant computations with simpler ones. It also affects the quality of the cell for \nlsat: The representation is simpler, but the cell covers less of the search space.
This paper follows up on the extended abstract \cite{PromiesNA24}, in which we briefly introduced this idea.
In particular, our contributions include the following:

\begin{itemize}
  \item We generalize the ideas from \cite{PromiesNA24} and provide a clear algorithmic formulation.
  \item We elaborate on the reasons for potential non-termination of \nlsat using the modified single cell construction by providing an example, and adapt the method to guarantee termination.
  \item We explore several variants of our method, including different ways of constructing additional polynomials and criteria for when to insert them.
  \item We provide an extensive experimental evaluation of these variants, using our implementation in the \smt solver \smtrat \cite{corzilius_smt-rat_2015,noauthor_smt-rat_nodate}.
\end{itemize}

\paragraph*{Structure.} In \Cref{sec:prelimiaries} we recall some background, including the levelwise single cell construction from \cite{NALBACH2024102288}, which we adapt for under-approximation in \Cref{sec:apx}.
We explore variants of our approach in \Cref{sec:variants} and discuss experiments in \Cref{sec:experiments}. In \Cref{sec:conclusion} we conclude with an outlook on future work.

\paragraph*{Related Work.}
Some incomplete but fast explanation backends for \nlsat are not based on \cad, but on Fourier-Motzkin variable elimination \cite{FMSCC}, virtual substitution \cite{virtualS}, or interval constraint propagation \cite{KremerPhd}.

The general idea of using linear approximations or abstractions for \qfnra has been explored before.
For example, the \texttt{ksmt} calculus \cite{ksmt} transforms each formula into a set of linear arithmetic clauses and a set of non-linear constraints and then incrementally constructs a model, mainly performing linear reasoning.
Partial assignments which falsify a non-linear constraint are generalized to conjunctions of linear constraints, using local linear approximations of the non-linear functions.
Incremental linearization \cite{inclin} computes a linear abstraction of the input formula by replacing all multiplications with uninterpreted functions, possibly allowing to derive unsatisfiability by purely linear reasoning.
The abstraction is refined incrementally by adding linear arithmetic axioms for the individual multiplications.
Neither of these methods is complete, though \texttt{ksmt} was shown to be $\delta$-complete.

\section{Preliminaries}\label{sec:prelimiaries}
We assume that the reader has some basic knowledge about multivariate polynomials, logic and \smt solving. For an introduction, we refer to, e.g., \cite{Barrett2018,polysCox}.

Let $\NN$, $\QQ$, and $\RR$ be the sets of natural (incl. 0), rational, respectively real numbers.
For $k \in \NN$, let $[k] := \{1,{\ldots},k\}$; for $r\in\RR^k$ and $i\in[k]$ let $r_i$ be the $i$th entry in $r$, $r_{[i]}:=(r_1,{\ldots},r_i)$, and $r_{[0]}=()$.
For the extent of this paper, we fix some $n \in \mathbb{N} \setminus \{0\}$ and ordered real-valued variables  $x_1 \prec {\ldots} \prec x_n$.

\paragraph*{Polynomials.}
For $i \in [n]$, let $\PP{i}$ be the set of all polynomials in $x_1, \ldots, x_i$ with rational coefficients
(for $i=0$, this is $\QQ$).
We can write any $p \in \PP{i}$ as a univariate polynomial $p = c_d x_i^d + c_{d-1} x_i^{d-1} + \ldots + c_1 x_i + c_0$ in $x_i$ with either $d=0$ or $c_d\neq 0$, with \key{degree} $\dg{p}{i} := d$, \key{coefficients} $\coeff{p}{i} := \{c_0,\ldots,c_d\} \subset \PP{i-1}$, and \key{leading coefficient} $\ldcf{p}{i} := c_d $.
Given $r \in \RR^i$, we write $p(r)$ for the evaluation $p(r_1,\ldots, r_i) \in \RR$.
Given $r \in \RR^{i-1}$ and $r'\in\RR$, let $p(r,x_i) \in \RR[x_i]$ result from $p$ by substituting $r_1,\ldots,r_{j-1}$ for $x_1,\ldots, x_{j-1}$, and we write $p(r,r')$ for $p(r_1,\ldots,r_{j-1},r')$.

\paragraph*{Real Roots.}
Let $p \in \PP{i}$. A \key{(real) root} of $p$ is a point $r \in \RR^i$ so that $p(r) = 0$;
the \key{variety} of $p$ is the set of its roots.
The roots of univariate polynomials build the set of \key{(real) algebraic numbers} $\RA := \{r \in \RR \mid \exists q \in \QQ[x]. q(r)=0\}$. 
Given $r \in \RA^{i-1}$, one can compute $\textup{realRoots}(p,r) := \{r_{j} \in \RR \mid p(r,r_{j}) = 0\}$, i.e. the \key{roots of $p$ over $r$}.
If $\textup{realRoots}(p,r) = \RR$, then we say that $p$ is \key{nullified over $r$}.
Otherwise, $\textup{realRoots}(p,r)$ is a finite set of algebraic numbers.

Let $j \in \NN$. An \key{indexed root expression} $\rex{p}{j}{x_i} : \RR^{i-1} \rightarrow \RR \cup \{\bot\}$ maps each $r \in \RR^{i-1}$ to the $j$-th root of $p$ over $r$ if it exists, and to $\bot$ otherwise:
\[
    \rex{p}{j}{x_i}(r) := \left.\begin{cases}
        \bot & \text{if } \textup{realRoots}(p,r) = \RR \text{ or } j > |\textup{realRoots}(p,r)|, \text{ and else}\\
        z_j & \text{where } \textup{realRoots}(p,r) = \{z_1, \ldots, z_k\},  z_1 < \ldots < z_k.
    \end{cases}\right.
\]
We refer to the polynomial $p$ of an indexed root expression $\xi = \rex{p}{j}{x_i}$ by $\xi.p$, and we say that the level of $\xi$ is $i$.
The set of indexed root expressions of level $i$ is $\ires{i}$.
Given $P \subseteq \PP{i}$ and $r \in \RA^{i-1}$, one can compute the \key{indexed root expressions defined over $r$}: $\textup{irExp}(P,r) := \{\xi \in \ires{i} \mid \xi.p \in P \text{ and } \xi(r) \neq \bot\}$.

The \key{resultant} of two polynomials $p,q \in \PP{i}$ w.r.t. $x_i$ is a polynomial $\res{p}{q}{i} \in \PP{i-1}$, such that for all $r \in \RR^{i-1}$ it holds: if there is $r'\in\RR$ with $p(r,r')=0 = q(r,r')$, then $\res{p}{q}{i}(r)=0$.
The \key{discriminant} of $p$ is $\disc{p}{i} := \res{p}{p'}{i}$, where $p'$ is the derivative of $p$ w.r.t $x_i$.

\paragraph*{Formulas.}
A formula of the quantifier-free fragment of (non-linear) real arithmetic (\qfnra) is a Boolean combination of (polynomial) \key{constraints} of the form $p \sim 0$, with $p \in \PP{n}$ and $\sim \in \{<,>,=,\leq,\geq,\neq\}$.
An \key{extended constraint} has the form $x_i \sim \xi$, where $\xi \in \ires{i}$ is an indexed root expression.

\paragraph*{Cells.}
A \key{cell} is a non-empty connected set $S \subseteq \RR^i$ for some $i \in [n]$.
We call $S$ (semi-)\key{algebraic} if it is the solution set of a conjunction of constraints and extended constraints.
We call $p$ \key{sign-invariant} over $S$, if the sign of $p(r)$ is the same for all points $r\in S$ (i.e. $\forall r\in S.\;p(r) \sim 0$ for a fixed $\sim \in \{<,=,>\}$).
We call $S$ sign-invariant for $P \subset \PP{i}$, if all $p \in P$ are sign-invariant over $S$. 

\subsection{Levelwise Single Cell Construction}\label{sec:scc}
Given a constraint set $C$ and an assignment $s\in\RR^i$, if all extensions of $s$ evaluate some constraints from $C$ to false, then we say that $s$ is \emph{inconsistent} with $C$.
In \nlsat, if the theory assignment is inconsistent with the constraints $C$ defined to be true by the Boolean assignment, then we generalize $s$ to a cell $S\subseteq\RR^i$, whose points are all inconsistent with $C$. To do so, we derive a set $P\subset\PP{i}$ of projection polynomials, such that $s\in S$ and the sign-invariance of $P$ over $S$ assures that all points in $S$ are inconsistent with $C$.
The learned explanation is then $(\lnot C \lor \lnot \varphi_S)$, where $\varphi_S$ is a conjunction of extended constraints defining $S$.

\begin{definition}
    Given $i\in [n]$, $P \subset \PP{i}$ and $s \in \RR^i$, the problem of \emph{single cell construction} (SCC) is to compute a description of an algebraic cell $S \subseteq \RR^i$ so that $s \in S$ and all $p \in P$ are sign-invariant over $S$.
\end{definition}

\noindent We now recall the \emph{levelwise} SCC approach \cite{NALBACH2024102288} by Nalbach et al., which is summarized in \Cref{algo:levelwise}, and which we will modify in \Cref{sec:apx}.
\begin{algorithm}
    \caption{$\texttt{levelwise-scc}(P, s)$}\label{algo:levelwise}
    \SetKwInOut{Input}{Input}
    \SetKwInOut{Output}{Output}
    \DontPrintSemicolon
    \Input{A finite $P \subseteq \PP{i}$, and $s \in \RR^i$ (with $i \in [n]$).}
    \Output{A description $(I_1 \land \ldots \land I_i)$ of a sign-invariant cell for $P$ containing $s$}
    \For{$j=i,\ldots,1$}{
        $P_j := P \cap (\PP{j} \setminus \PP{j-1})$\; \label{algo:loop_start}
        \textbf{if} $\textup{realRoots}(p,s_{[j-1]}) = \RR$ for some $p \in P_j$ \textbf{then} \Return{FAIL}\;
        $\{\xi_1, \ldots, \xi_k\} := \textup{irExp}(P_j, s_{[j-1]})$ s.t. $\xi_1(s_{[j-1]}) \leq \ldots \leq \xi_k(s_{[j-1]})$\; \label{algo:levelwise-line-roots}
        \textbf{if} $k=0$ \hspace{10.4em} \textbf{then} $I_j := (true)$\;
        \textbf{else if} $s_j = \xi_{\ell}(s_{[j-1]})$ for some $\ell$ \textbf{then} $I_j := (x_j = \xi_{\ell})$\;
        \textbf{else if} $s_j > \xi_k(s_{[j-1]})$ \hspace{4.5em} \textbf{then} $I_j := (\xi_k < x_j)$\;
        \textbf{else if} $s_j < \xi_1(s_{[j-1]})$ \hspace{4.5em} \textbf{then} $I_j := (x_j < \xi_{1})$\;
        \textbf{else} $I_j := (\xi_{\ell} < x_j \land x_j < \xi_{\ell+1})$ for the $\ell$ with $\xi_\ell(s_{[j-1]}) < s_j < \xi_{\ell + 1}(s_{[j-1]})$\;\label{algo:loop_mid}
        \textbf{add} to $P$ discriminants and coefficients ensuring delineability\;\label{algo:levelwise-line-del}
        \textbf{add} to $P$ resultants ensuring sign-invariance\;\label{algo:levelwise-line-res}\label{algo:loop_end}
    }
    \Return{$(I_1 \land \ldots \land I_i)$}
\end{algorithm}

Let $i\in [n]$, $s \in \RR^i$, $P \subset \PP{i}$, and for $j\in[i]$ let $P_j$ be the polynomials from $P$ with largest variable $x_j$ (i.e. those containing $x_j$, but not $x_{j+1}, \ldots, x_i$).
For each dimension $j=i,{\ldots},1$, the algorithm determines a \emph{symbolic interval} $I_j$ of the form $(x_j = \xi)$, $(x_j < \xi)$, $(x_j > \xi)$, or $(\xi < x_j \land x_j < \xi')$ for some $\xi,\xi' \in \ires{j}$, bounding the value of $x_j$ w.r.t. the lower variables $x_1,\ldots,x_{j-1}$.
For all $r \in \RR^{j-1}$ with $\xi(r) \neq \bot \neq \xi'(r)$, $I_j$ defines a \emph{concrete interval} $I_j(r) \subseteq \RR$ which is $\{\xi(r)\}$, $(-\infty,\xi(r))$, $(\xi(r),\infty)$, or $(\xi(r),\xi'(r))$, respectively.

The final cell described by $I_1 \land \ldots \land I_i$ is \emph{locally cylindrical}, i.e., $I_1$ defines a concrete interval $S_1 \subseteq \RR$, and for $j = 2,\ldots,i$, the root expressions in $I_{j}$ are defined everywhere over $S_{j-1}$, and they specify the cell
\[S_j = \{(r,r') \in \RR^j \mid r \in S_{j-1} \land r' \in I_j(r)\}.\]

To determine $I_j$, we assign $x_1,\ldots,x_{j-1}$ to the underlying sample $s_{[j-1]}$, and compute $\textup{realRoots}(p, s_{[j-1]})$ for all $p \in P_j$. These roots witness the indexed root expressions in \Cref{algo:levelwise-line-roots}.
The greatest root below (or equal to) $s_j$ and the smallest root above (or equal to) $s_j$ provide the interval boundaries (if they do not exist, $-\infty$ and $\infty$ are used). Thus, the polynomials in $P_j$ are sign-invariant over $\{ s_{[j-1]} \} \times I_j(s_{[j-1]})$.

The idea is now that the to-be-constructed underlying cell $S_{j-1}\subseteq \RR^{j-1}$ will be a neighbourhood around $s_{[j-1]}$ over which the root expressions in $I_j$ define total continuous functions such that $S_j$ is a sign-invariant cell for $P_j$ containing $s_{[j]}$.
To obtain an underlying cell with the desired properties, the concepts of delineability and order-invariance (a strengthening of sign-invariance) are used:

\begin{definition}[Delineability \cite{DBLP:conf/automata/Collins75}]
    Let $j\in[n{-}1]$ and $S \subseteq \RR^j$ be a cell.
    A non-zero polynomial $p \in \PP{j+1}$ is \emph{delineable over $S$} if there exist $k \geq 0$ continuous functions $\theta_1, \ldots, \theta_k: S \rightarrow \RR$ and constants $m_1,\ldots m_k \in \NN$ such that for all $r \in S$ holds $\theta_1(r) < \ldots < \theta_k(r)$, $\textup{realRoots}(p,r) = \{\theta_1(r), \ldots, \theta_k(r)\}$, and for all $\ell=1,\ldots,k$ the multiplicity of the root $\theta_{\ell}(r)$ in $p(r,x_{j+1})$ is $m_{\ell}$.
\end{definition}
\begin{definition}[Order-invariance \cite{McCallum85}]
    The \emph{order of $p \in \PP{j}$ at $r \in \RR^{j}$}, denoted $ord(p,r)$, is the minimum $k$ so that some partial derivative of $p$ of total order $k$ does not evaluate to $0$ at $r$ (or $\infty$, if all evaluate to $0$).
    We call $p$ \emph{order-invariant} on $R \subseteq \RR^{j}$ if $ord(p,r) = ord(p,r')$ for all $r,r' \in R$.
\end{definition}

The indexed root expressions of $p$ determined in \Cref{algo:levelwise-line-roots} witness the $\theta$ functions. That these are well-defined continuous functions is assured by the delineability of $p$.
The method uses the fact that $p$ is delineable over the underlying cell $S_{j-1}$ if $\disc{p}{j}$ is order-invariant on $S_{j-1}$ and $\ldcf{p}{j}$ is sign-invariant over $S_{j-1}$; thus it adds these polynomials to $P$ and ensures their properties on the next level, thereby restricting $I_{j-1}$ and the levels below.

The method still has to ensure that no root function crosses the cell boundaries (the root expressions in $I_j$) over $S_{j-1}$, because this would imply a sign change of some polynomial within $S_j$.
For this purpose, we use that for any two $\xi,\xi' \in \textup{irExp}(P_j,s_{[j-1]})$ and $\sim\; \in \{<,=\}$ with $\xi(s_{[j-1]}) \sim \xi'(s_{[j-1]})$, it holds: If $\res{\xi.p}{\xi'.p}{j}$ is order-invariant on $S_{j-1}$, then $\xi(r) \sim \xi'(r)$ for all $r \in S_{j-1}$.

Since only intersections of roots with the cell boundaries are relevant, it suffices to maintain a \emph{partial} ordering of the root functions, ensured by certain resultants (\Cref{algo:levelwise-line-res}).
For example, if $\xi_1,\ldots,\xi_k$ are as in \Cref{algo:levelwise} and $I_j = (\xi_{\ell} < x_j \land x_j < \xi_{\ell+1})$, then we could add $\{\res{\xi_{\ell'}.p}{\xi_{\ell}.p}{j} \mid \ell' < \ell\} \cup \{\res{\xi_{\ell+1}.p}{\xi_{\ell'}.p}{j} \mid \ell + 1 < \ell'\}$, ensuring that $\xi_1,\ldots,\xi_{\ell-1}$ stay below (or equal to) $\xi_{\ell}$ and $\xi_{\ell+2},\ldots,\xi_k$ stay above (or equal to) $\xi_{\ell+1}$.
By exploiting transitivity, other partial orderings and thus other sets of resultants are also viable; this is a heuristic decision.
However, the resultant $\res{\xi_{\ell}.p}{\xi_{\ell+1}.p}{i}$ of the bounds is always added to ensure connectedness of $S_j$.

Note that this method fails if any of the encountered polynomials is nullified on the underlying sample, because then delineability cannot be ensured in the same way.
The method can detect this and return ``FAIL'', and a different, complete approach is used instead.
To further ensure that no polynomial $p \in P_j$ is nullified over any other point in $S_{j-1}$, some $c \in \coeff{p}{j}$ with $c(s_{[j-1]}) \neq 0$ is also added to $P$ in \Cref{algo:levelwise-line-del}.
After adding all required polynomials to $P$, if $j>1$, then the method proceeds with the construction of $I_{j-1}$ in the same way.
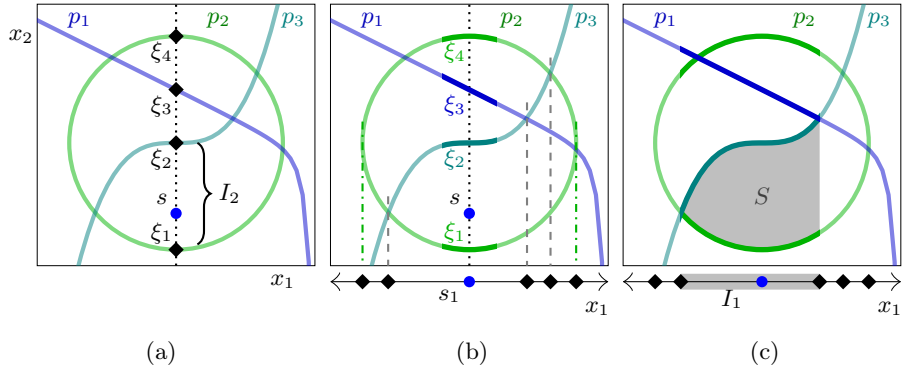
\begin{figure}[t]
    \begin{subfigure}{0.35\textwidth}
      \begin{tikzpicture}[scale=1.42]
        \coordinateBox
        \node at (1,-1.55) {\phantom{$x_1$}};
    
        \draw[black, thick, dotted] (0,-1.15) -- (0,1.3);
  
        \setupExample
    
        \rootPoint{[below left,xshift=2,yshift=-2]:$\xi_4$}{(0,1)}
        \rootPoint{[below left,xshift=2,yshift=-2]:$\xi_3$}{(0,0.5)}
        \rootPoint{[below left,xshift=2,yshift=-2]:$\xi_2$}{(0,0)}
        \rootPoint{[above left,xshift=2,yshift=-4]:$\xi_1$}{(0,-1)}
      
        \sample{$s$}{0}{-0.66}
        \symInt[xshift=0.5cm]{$I_2$}{(0.15,-0.95)}{(0.15,0)}
    
      \end{tikzpicture}
      \caption{}\label{fig:levelwise:a}
    \end{subfigure}
    \begin{subfigure}{0.31\textwidth}
      \begin{tikzpicture}[scale=1.42, >={Classical TikZ Rightarrow[scale=2]}]
        \draw[black, very thin] (-1.3,-1.15) rectangle (1.3,1.3);
        \node at (1.2,-1.55) {$x_1$};
        \draw[black, thick, dotted] (0,-1.15) -- (0,1.3);
        \setupExample
        \highlighted{-0.25}{0.25}
  
        \projection{-0.76}{-0.5}
        \projection{0.76}{0.8}
        \projection{0.54}{0.38}
  
        \draw[thick, dash dot, pTwo] (-1,0.2) -- (-1,-1.15);
        \draw[thick, dash dot, pTwo] (1,0.2) -- (1,-1.15);
  
        \rootLabelUp[pTwo,xshift=2,yshift=-4]{$\xi_1$}{(0,-1)}
        \rootLabel[pThree,xshift=2,yshift=-2]{$\xi_2$}{(0,0)}
        \rootLabel[pOne,xshift=2,yshift=-2]{$\xi_3$}{(0,0.5)}
        \rootLabel[pTwo,xshift=2,yshift=-2]{$\xi_4$}{(0,1)}
  
        \sample{$s$}{0}{-0.66}
  
        \draw[<->] (-1.3, -1.3) -- (1.3,-1.3);
        \node[blue, circle, fill, inner sep=1.5pt, label={[below left,yshift=-2]:$s_1$}] (s) at (0,-1.3) {};
        \rootPoint{}{(-0.76,-1.3)}
        \rootPoint{}{(0.54,-1.3)}
        \rootPoint{}{(0.76,-1.3)}
        \rootPoint{}{(-1,-1.3)}
        \rootPoint{}{(1,-1.3)}
      \end{tikzpicture}
      \caption{}\label{fig:levelwise:b}
  \end{subfigure}
    \begin{subfigure}{0.31\textwidth}
      \begin{tikzpicture}[scale=1.42, >={Classical TikZ Rightarrow[scale=2]}]
        \draw[black, very thin] (-1.3,-1.15) rectangle (1.3,1.3);
        \node (x1) at (1.2,-1.55) {$x_1$};
        \exactS
        \setupExample
        \highlighted{-0.76}{0.54}
        
        \draw[line width=6pt, gray!50] (-0.76,-1.3) -- (0.54,-1.3);
        \draw[<->] (-1.3, -1.3) -- (1.3,-1.3);
        \node[blue, circle, fill, inner sep=1.5pt, label={[below left,yshift=-2,xshift=-4]:$I_1$}] (s) at (0,-1.3) {};
        \rootPoint{}{(-0.76,-1.3)}
        \rootPoint{}{(0.54,-1.3)}
        \rootPoint{}{(0.76,-1.3)}
        \rootPoint{}{(-1,-1.3)}
        \rootPoint{}{(1,-1.3)}
  
      \end{tikzpicture}
      \caption{}\label{fig:levelwise:c}
  \end{subfigure}
    \caption{Illustration of the levelwise construction described in \Cref{ex:levelwise}.}\label{fig:levelwise}
  \end{figure}
\begin{example}\label{ex:levelwise}
    \Cref{fig:levelwise} illustrates an example with a given sample $s \in \mathbb{R}^2$ and polynomials $P = \{p_1, p_2, p_3\} \subset \mathbb{Q}[x_1,x_2]$.
    The line labelled with $p_1$ indicates the variety of $p_1$ i.e. those points $r\in \mathbb{R}^2$ with $p_1(r) = 0$, and similarly for $p_2, p_3$.
    
    We start at level $2$, where $P_2 = P$.
    At $x_1 = s_1$, there is one root of $p_2$ below $s_2$ and one root of each polynomial above $s_2$.
    The roots closest to $s_2$ define the symbolic interval $I_2 := (\xi_1 < x_2 \land x_2 < \xi_2)$ (\Cref{fig:levelwise:a}).
    To ensure correctness of this interval for all values of $x_1$ in the underlying cell (to be computed at level $1$), the discriminants and leading coefficients of $p_1, p_2, p_3$ are added to $P$ (dash-dotted lines in \Cref{fig:levelwise:b}).
    Moreover, adding $\res{p_3}{p_1}{2}$ and $\res{p_3}{p_2}{2}$ (dashed lines in \Cref{fig:levelwise:b}) ensures that none of the root functions cross the upper interval bound defined by $p_1$ over $I_1$.
    Note that the crossing of $\xi_3$ and $\xi_4$ is irrelevant, and the corresponding resultant of $p_1$ and $p_2$ is thus avoided.
    On level 1, we isolate the roots of these polynomials and use the closest to $s_1$ as interval boundaries, resulting in the shaded cell (\Cref{fig:levelwise:c}).
\end{example}
%
\section{Adding Polynomials to Avoid Expensive Resultants}\label{sec:apx}

The running time of SCC is dominated by discriminant and resultant computations.
Given $p,q \in \PP{n}$ with $d_j = \dg{p}{j}$, $e_j = \dg{q}{j}$ for all $j \in [n]$, the resultant of $p$ and $q$ requires $\mathcal{O}(d_n e_n)$ polynomial multiplications and, in the worst case, its degree w.r.t. any $x_j$ is $d_n e_j + d_j e_n$.
Given $P\subset \PP{n}$ with maximal degree $d$ in any variable, the degree and time complexity of resultants during SCC grows doubly exponential in worst-case ($d^{2^n}$), as resultants of resultants are computed iteratively.
The levelwise method already mitigates the effort for computing resultants, e.g. it avoids involving polynomials of high degree. However, some cannot be avoided, e.g. polynomials defining the bounds of an interval are always included in some resultant computations.

Our approach is as follows:
If a high-degree polynomial $p$ would define a bound of a symbolic interval $I_j$, then  we add a new \emph{linear} polynomial $p_* = (x_j - c)$ to $P$, whose root $c \in \QQ$ lies \emph{strictly} between that bound and the sample $s_j$.
Using this as a new, under-approximating bound for $I_j$ allows replacing expensive resultants of $p$ and some $q\in P$ by resultants of $p_*$ with $q$, which are simply computed by substituting $c$ for $x_j$ in $q$, and their degree is bounded by the one of $q$. 
The choice of $c$ ensures that (1) the resulting cell still contains the sample, (2) all other roots remain outside the cell, and (3) the underlying levels still generalize to some larger cell. Towards the latter, $c$ should not be equal to any polynomial's root, as the cell then would only generalize to a section $I_{j-1}=(x_{j-1} = \xi)$ on the level below as the resultant of that polynomial with $p_*$ would have a root at $s_{[j-1]}$.

\begin{figure}[t]
    \begin{subfigure}{0.35\textwidth}
        \begin{tikzpicture}[scale=1.42]
          \coordinateBox
          \node at (1,-1.55) {\phantom{$x_1$}};
      
          \draw[black, thick, dotted] (0,-1.15) -- (0,1.3);
    
          \setupExample
      
          \rootPoint{[below left,xshift=2,yshift=-2]:$\xi_4$}{(0,1)}
          \rootPoint{[above right,xshift=0,yshift=-4]:$\xi_3$}{(0,0.5)}
          \rootPoint{[above left,xshift=2,yshift=-4]:$\xi_2$}{(0,0)}
          \rootPoint{[above left,xshift=2,yshift=-4]:$\xi_1$}{(0,-1)}
        
          \sample{$s$}{0}{-0.66}
      
        \end{tikzpicture}
        \caption{}\label{fig:apx-a}
      \end{subfigure}
      \begin{subfigure}{0.31\textwidth}
        \begin{tikzpicture}[scale=1.42, >={Classical TikZ Rightarrow[scale=2]}]
          \draw[black, very thin] (-1.3,-1.15) rectangle (1.3,1.3);
          \node (x1) at (1.2,-1.55) {$x_1$};
          \draw[black, thick, dotted] (0,-1.15) -- (0,1.3);
          \setupExample

          \draw[ultra thick, magenta] (-1.3,-0.33) -- (1.3,-0.33) node[above,pos=0.06] {$p_*$};
          \node[magenta,fill,inner sep=2.5pt,label={[above right,xshift=2,yshift=-4,magenta]:$\xi_*$}] at (0,-0.33) {};
    
          \projection{-0.59}{-0.33}
          \projection{0.93}{-0.33}
          \projection{-0.93}{-0.33}
          \projection{1.16}{-0.33}
          \draw[thick, dash dot, pTwo] (-1,0.2) -- (-1,-1.15);
          \draw[thick, dash dot, pTwo] (1,0.2) -- (1,-1.15);

          \rootPoint{[below left,xshift=2,yshift=-2]:$\xi_4$}{(0,1)}
          \rootPoint{[above right,xshift=0,yshift=-4]:$\xi_3$}{(0,0.5)}
          \rootPoint{[above left,xshift=2,yshift=-4]:$\xi_2$}{(0,0)}
          \rootPoint{[above left,xshift=2,yshift=-4]:$\xi_1$}{(0,-1)}
    
          \symInt[xshift=0.5cm]{$I_2$}{(0.15,-0.95)}{(0.15,-0.35)}

          \sample{$s$}{0}{-0.66}

          \draw[<->] (-1.3, -1.3) -- (1.3,-1.3);
          \node[blue, circle, fill, inner sep=1.5pt, label={[below left,yshift=-2]:$s_1$}] (s) at (0,-1.3) {};
          \rootPoint{}{(-1,-1.3)}
          \rootPoint{}{(1,-1.3)}
          \rootPoint{}{(-0.93,-1.3)}
          \rootPoint{}{(-0.59,-1.3)}
          \rootPoint{}{(0.93,-1.3)}
          \rootPoint{}{(1.16,-1.3)}

        \end{tikzpicture}
        \caption{}\label{fig:apx-b}
    \end{subfigure}
      \begin{subfigure}{0.31\textwidth}
        \begin{tikzpicture}[scale=1.42, >={Classical TikZ Rightarrow[scale=2]}]
          \draw[black, very thin] (-1.3,-1.15) rectangle (1.3,1.3);
          \node (x1) at (1.2,-1.55) {$x_1$};

          \begin{scope}
            \clip (0,0) circle (1);
            \fill[pattern=north west lines, pattern color=gray, opacity=1, domain=-0.76:0.54]
            (-0.76,-1.3) -- (-0.76,-0.65) -- plot ({\x},{1.5*\x*\x*\x}) -- (0.54,0.23) -- (0.54,-1.3) --cycle;
            \node[gray!50!black] (S) at (0,-0.15) {$S$};
          \end{scope}

          \approxS{-0.59}{(0.93,-0.33)}
          \node[black] (S2) at (0,-0.66) {$S'$};

          \setupExample
          \highlighted{-0.59}{0.93}
          
          \draw[ultra thick, magenta] (-1.3,-0.33) -- (1.3,-0.33);          
          
          \draw[line width=6pt, magenta!50] (-0.59,-1.3) -- (0.93,-1.3);
          \draw[<->] (-1.3, -1.3) -- (1.3,-1.3);
          \node[blue, circle, fill, inner sep=1.5pt, label={[below right,yshift=-3,xshift=5]:$I_1$}] (s) at (0,-1.3) {};
          \rootPoint{}{(-1,-1.3)}
          \rootPoint{}{(1,-1.3)}
          \rootPoint{}{(-0.93,-1.3)}
          \rootPoint{}{(-0.59,-1.3)}
          \rootPoint{}{(0.93,-1.3)}
          \rootPoint{}{(1.16,-1.3)}
        \end{tikzpicture}
        \caption{}\label{fig:apx-c}
    \end{subfigure}
  \caption{%
    Approximated cell using a new linear polynomial
  }\label{fig:apx}
\end{figure}
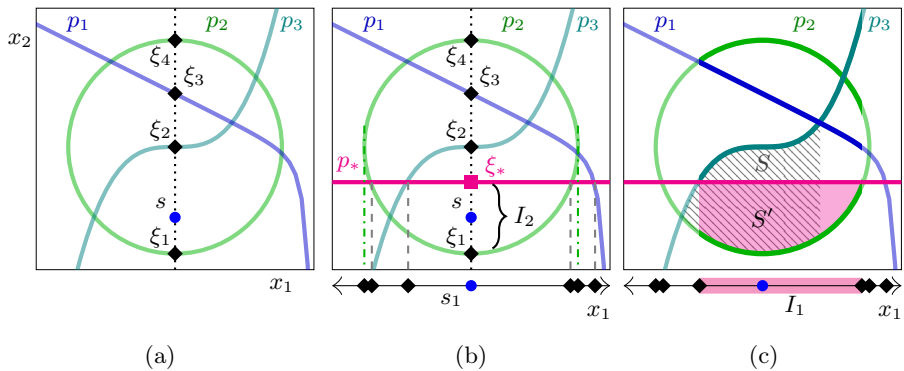

\begin{example}\label{ex:apx}
    In \Cref{ex:levelwise} (depicted in \Cref{fig:apx-a}), the levelwise method cannot avoid the resultants $\res{p_2}{p_3}{2}$ and $\res{p_3}{p_1}{2}$, which are expensive if $p_1$, $p_2$ and $p_2$ have high degree.
    Adding a linear polynomial $p_*$ with a root $\xi_*$ between $s$ and $\xi_2$, lets us use $\xi_*$ as upper bound of $I_2$, and it suffices to compute only the cheap resultants of $p_*$ with $p_1,p_2,p_3$ (like shown in \Cref{fig:apx-b}).
    \Cref{fig:apx-c} shows the resulting cell $S'$ (pink, shaded) and the original cell $S$ (hatched).
    Importantly, $p_1,p_2$ and $p_3$ are sign-invariant over $S'$.
    Note that we might also under-approximate the lower bound of $I_2$, or the bounds of $I_1$, leading to an even smaller cell, but also reducing computations.
\end{example}

We generalize this approach and modify \texttt{levelwise-scc}, so that on each level, it can dynamically extend the working set $P$ with arbitrary polynomials, resulting in our new method \apxscc{} shown in \Cref{algo:apx}, which adds \Cref{algo:apx-line}.
The method \texttt{apx-criteria} decides whether adding new polynomials is beneficial, by checking e.g. whether the symbolic interval $I_j$ would be defined by a polynomial with high degree.
If the criteria are fulfilled, \texttt{apx-polys} computes a set of new polynomials (called \emph{auxiliary} polynomials) that are added to $P$.
The auxiliary polynomials have roots at favourable positions, admitting an easier set of resultants to be computed. We discuss the different approaches for implementing \texttt{apx-criteria} and \texttt{apx-polys} in \Cref{sec:variants}.

\begin{algorithm}[b]
    \caption{$\texttt{apx-scc}(P, s)$}\label{algo:apx}
    \SetKwInOut{Input}{Input}
    \SetKwInOut{Output}{Output}
    \DontPrintSemicolon
    \Input{A finite $P \subseteq \PP{i}$ and $s \in \RR^i$ (with $i\in[n]$).}
    \Output{A description $(I_1,\ldots,I_i)$ of a sign-invariant cell for $P$ containing $s$}
    \For{$j=i,\ldots,1$}{
        $P_j := P \cap (\PP{j} \setminus \PP{j-1})$\;
        \textbf{if} $\textup{realRoots}(p,s_{[j-1]}) = \RR$ for some $p \in P_j$ \textbf{then} \Return{FAIL}\;
        \textbf{if} \texttt{apx-criteria(}$P_j,s_{[j]}$\texttt{)} \textbf{then} $P := P \cup \texttt{apx-polys(}P_j,s_{[j]}\texttt{)}$\;\label{algo:apx-line}
        \textbf{compute} $I_j$ as before (Lines \ref{algo:loop_start}-\ref{algo:loop_mid} in Algorithm \ref{algo:levelwise})\;
        \textbf{add} to $P$ polynomials ensuring delineability and sign-invariance (Lines \ref{algo:levelwise-line-del}-\ref{algo:loop_end} in Algorithm \ref{algo:levelwise})\;
    }
    \Return{$(I_1,\ldots,I_i)$}
\end{algorithm}

Adding auxiliary polynomials makes the maximal possible sign-invariant cell around the given sample point smaller, hence we compute some kind of \emph{under-approximation}.
However, as shown in \Cref{fig:apx-c}, the cell computed by \apxscc{} is not necessarily a subset of the cell computed by \texttt{levelwise-scc}, as strengthening the bounds of $I_j$ might allow weakening some bound of $I_{j-1}$.
In any case, both the original and the approximated cell ($S$ and $S'$) are subsets of the maximal sign-invariant cell $S_{max}$ for $P$.
While $S = S_{max}$ may hold, it always holds $S' \subsetneq S_{max}$.

Our modification has two main benefits for the usage in \nlsat: (1) Avoiding expensive resultant computations means that explanations can be computed much faster; and (2) during later computations, \nlsat needs to isolate the roots of the cell boundaries over further sample points, for checking whether a given sample lies in the excluded cell - the effort for these computations may be drastically reduced by polynomials of lower degree (or even degree $1$).
On the other hand, the under-approximated cells may lead to more cells generated throughout the search, and even lead to non-termination, as we will see in \Cref{sec:incompleteness}.

It is important to note that our modification does not eliminate the strong degree growth entirely, because the discriminants needed for delineability are also resultants.
Moreover, for sections $I_j = (x_j = \xi)$ we cannot compute meaningful approximations, forcing us to fall back to the default method.

\begin{theorem}[Correctness]
  Let $P \subset \PP{i}$ be finite and $s \in \RR^i$.
  If \texttt{apx-scc}$(P,s)$ yields the cell $S \subseteq \RR^i$, then $s \in S$ and $P$ is sign-invariant over $S$.
\end{theorem}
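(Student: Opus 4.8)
The plan is to reduce the statement to the correctness of \texttt{levelwise-scc} established by Nalbach et al.~\cite{NALBACH2024102288}. The only difference between \apxscc{} (\Cref{algo:apx}) and \texttt{levelwise-scc} (\Cref{algo:levelwise}) is the extra \Cref{algo:apx-line}, which \emph{only adds} polynomials to the working set $P$ and never removes any; the interval computation, the FAIL check, and the addition of delineability- and sign-invariance-ensuring polynomials are all carried out exactly as before, on the current value of $P$. Let $\hat P$ denote the value of the working set at the moment \apxscc$(P,s)$ terminates (we are in the case where it does not return FAIL). The key claim is that the run of \apxscc$(P,s)$ coincides, level by level, with a run of \texttt{levelwise-scc} on input $(\hat P,s)$ that makes the same heuristic choices (of partial orderings of root functions, hence of which resultants to add). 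Granting this, the correctness of \texttt{levelwise-scc} from~\cite{NALBACH2024102288} gives that the returned cell $S$ contains $s$ and is sign-invariant for $\hat P$; since $\hat P \supseteq P$, it is in particular sign-invariant for $P$, which is the assertion.

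\textbf{Verifying the reduction.} To justify the level-by-level coincidence, first observe that every polynomial ever added to $P$ while processing level $j$ has level at most $j$: the discriminants, leading coefficients, resultants and coefficients added for delineability and sign-invariance all lie in $\PP{j-1}$, and the auxiliary polynomials produced by \texttt{apx-polys}$(P_j,s_{[j]})$ are constructed to have level at most $j$ (e.g.\ $p_*=x_j-c$). Hence, when the main loop reaches level $j$, the set $P_j := P\cap(\PP{j}\setminus\PP{j-1})$ already contains \emph{every} level-$j$ polynomial of $\hat P$ — including any level-$j$ auxiliary polynomials inserted in \Cref{algo:apx-line} of the current iteration, after which $P_j$ is recomputed inside ``compute $I_j$ as before''. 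The delineability/order-invariance reduction recalled in \Cref{sec:scc} is uniform in the finite set $P_j$ of level-$j$ polynomials: for any such $P_j$, if the underlying cell $S_{j-1}\ni s_{[j-1]}$ is order-invariant for the added discriminants and resultants and sign-invariant for the added (leading) coefficients, then $S_j=\{(r,r')\mid r\in S_{j-1},\ r'\in I_j(r)\}$ is a well-defined sign-invariant cell for $P_j$ containing $s_{[j]}$. Chaining these requirements downward through all levels, exactly as in~\cite{NALBACH2024102288}, then yields that the final cell is sign-invariant for $\hat P$; sample containment needs no new argument, since at each level the bounds of $I_j$ are chosen as the greatest root $\le s_j$ and the least root $\ge s_j$ of $P_j$ over $s_{[j-1]}$ (or $\pm\infty$), so $s_j\in I_j(s_{[j-1]})$ regardless of the inserted auxiliary polynomials, and inductively $s_{[j]}\in S_j$.

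\textbf{Main obstacle.} The delicate part is the bookkeeping that makes the ``coincides level by level'' claim airtight: one must check that the auxiliary polynomials inserted in \Cref{algo:apx-line} cannot invalidate any precondition of the \texttt{levelwise-scc} machinery, and that inserting them at the start of a level rather than having them present from the outset is immaterial. Two points need attention. (i) Nullification of an auxiliary polynomial on $s_{[j-1]}$ is still detected, because the check ``$\textup{realRoots}(p,s_{[j-1]})=\RR$'' is part of ``compute $I_j$ as before'' and is thus re-run \emph{after} \Cref{algo:apx-line}; consequently a non-failing run of \apxscc{} corresponds to a non-failing run of \texttt{levelwise-scc}$(\hat P,s)$. (ii) Because polynomials are only ever added to $P$, the loop invariant ``$S_j$ is sign-invariant for all polynomials of level $\ge j$ currently in $P$, provided the lower levels discharge the order-/sign-invariance obligations for the polynomials of level $<j$ currently in $P$'' is monotone in $P$ and therefore survives each insertion. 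Once these are established, the inductive argument of~\cite{NALBACH2024102288} applies verbatim to $\hat P$, completing the proof.
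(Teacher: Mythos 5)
Your proof is correct and follows essentially the same route as the paper: reduce to the correctness of \texttt{levelwise-scc} by arguing that the \apxscc{} run coincides level by level with a \texttt{levelwise-scc} run on an enlarged input (you use the final working set $\hat P$, the paper uses $P \cup Q$ with $Q$ the auxiliary polynomials, which is immaterial since the derived polynomials are regenerated anyway under the same heuristic choices), and then conclude via $\hat P \supseteq P$ that sign-invariance for the larger set implies sign-invariance for $P$. The extra bookkeeping you supply (auxiliary polynomials have level exactly $j$, the FAIL check is re-run after insertion) only makes explicit what the paper's shorter argument asserts.
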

\begin{proof}
    The idea is that the original method could produce the same cell, when given an appropriately modified input.
    For each $j \in [i]$, let $Q_j \subset \PP{j}$ be the set of polynomials added by \apxscc{} on level $j$, and let $Q := Q_1 \cup \ldots \cup Q_i$.
    Consider the cell $S' \subseteq \RR^i$ computed by \texttt{levelwise-scc}$(P \cup Q, s)$.
    As that method is correct, $s \in S'$ and $P \cup Q$ is sign-invariant over $S'$.
    By definition of sign-invariance, this implies that $P$ is sign-invariant over $S'$.

    We show $S = S'$ to complete the proof:
    The polynomials in $Q_j$ do not impact the computations of the levels $j + 1, \ldots, i$, because $Q_j \cap P_k = \emptyset$ for each $k \neq j$.
    Thus, both \texttt{levelwise-scc} and \apxscc{} compute each level $j$ only based on $P \cup \bigcup_{k=j}^i Q_k$. As their computations do not differ (apart from adding $Q_j$), they compute exactly the same intervals and projections.\qed
\end{proof}

\subsection{Incompleteness}\label{sec:incompleteness}

Like \texttt{levelwise-scc}, our approach fails in the case of nullification and is thus incomplete as a stand-alone procedure.
However, one can detect nullification and resort to a complete construction for that cell.
More critically, the termination of \nlsat is no longer guaranteed when using \apxscc{} for explanations.

\begin{example}[Non-Termination]\label{ex:incompleteness}
    We continue \Cref{ex:apx}.
    After excluding the approximated cell from the search, \nlsat chooses a new value for $y$.
    This value can lie between the auxiliary cell boundary and the root of $p_3$, leading to the sample $s'$ as shown in \Cref{fig:incompleteness-a}.
    Since $s'$ and $s$ are in the same maximal sign-invariant cell for $p_1,p_2,p_3$, it leads to a conflict with the same constraints, and thus \apxscc{} is called with the same polynomials.
    When computing the new explanation, another auxiliary boundary is introduced between $s'$ and $p_3$, and this behaviour repeats, leading to a sample $s''$ as in \Cref{fig:incompleteness-b}.
    This can repeat indefinitely, and \nlsat will run into the same conflict over and over, without ever covering the entire search space.
    This behaviour cannot occur with the original construction, as is illustrated in \Cref{fig:incompleteness-c}.
\end{example}
\begin{figure}[t]
  \begin{subfigure}{0.35\textwidth}
    \begin{tikzpicture}[scale=1.425]
      \coordinateBox

      \draw[thick, dotted, gray] (0,-1.15) -- (0,1.3);
      
      \approxS{-0.59}{(0.93,-0.33)}

        \setupExample
        
        \draw[ultra thick, magenta] (-1.3,-0.33) -- (1.3,-0.33);  

      \node[blue, circle, fill, inner sep=1.5pt, label={[left, yshift=-2]:$s$}] at (0,-0.5) {};
      \node[blue, circle, fill, inner sep=1.5pt, label={[left, yshift=-2]:$s'$}] at (0,-0.20) {};
    \end{tikzpicture}
    \caption{}\label{fig:incompleteness-a}
  \end{subfigure}
  \begin{subfigure}{0.31\textwidth}
    \begin{tikzpicture}[scale=3.0875]
      \draw[black, very thin] (-0.6,-0.6) rectangle (0.6,0.5308);
      \node (x1) at (0.5,-0.675) {$x_1$};
      \begin{scope}
        \clip (-0.6,-0.6) rectangle (0.6,0.5308);

        \approxS{-0.59}{(0.93,-0.33)}
        \approxS{-0.45}{(1.3,-0.15)}
        \approxS{-0.3}{(0.9,-0.05)}
        
        \setupExample
        \draw[ultra thick, magenta] (-1.3,-0.33) -- (1.3,-0.33);

        \node[blue, circle, fill, inner sep=1.5pt, label={[left, yshift=-2]:$s$}] at (0,-0.5) {};
        \node[blue, circle, fill, inner sep=1.5pt, label={[left, yshift=-2]:$s'$}] at (0,-0.25) {};
        \node[blue, circle, fill, inner sep=1.5pt, label={[left, yshift=-2]:$s''$}] at (0,-0.1) {};
      \end{scope}
    \end{tikzpicture}
    \caption{}\label{fig:incompleteness-b}
\end{subfigure}
  \begin{subfigure}{0.31\textwidth}
    \begin{tikzpicture}[scale=1.425]
        \draw[black, very thin] (-1.3,-1.15) rectangle (1.3,1.3);
        \node (x1) at (1,-1.3) {$x_1$};
        \begin{scope}
            \clip (0,0) circle (1);
            \fill[gray!50, opacity=1, domain=-0.76:0.54]
            (-0.76,-1.3) -- (-0.76,-0.65) -- plot ({\x},{1.5*\x*\x*\x}) -- (0.54,0.23) -- (0.54,-1.3) --cycle;
        \end{scope}
        \setupExample
          \node[blue, circle, fill, inner sep=1.5pt, label={[left, yshift=-2]:$s$}] at (0,-0.5) {};
          \node[gray, circle, fill, inner sep=1.5pt, label={[gray,right, yshift=-2]:$s'$}] at (0,-0.25) {};
          \node[gray, circle, fill, inner sep=1.5pt, label={[gray,right, yshift=-2]:$s''$}] at (0,-0.1) {};
    \end{tikzpicture}
    \caption{}\label{fig:incompleteness-c}
\end{subfigure}
  \caption{%
    Non-termination of \nlsat with under-approximating cells, as described in \Cref{ex:incompleteness}. %
    Figure (b) zooms in on the area between $p_3$ and the approximation.
  }\label{fig:incompleteness}
\end{figure}
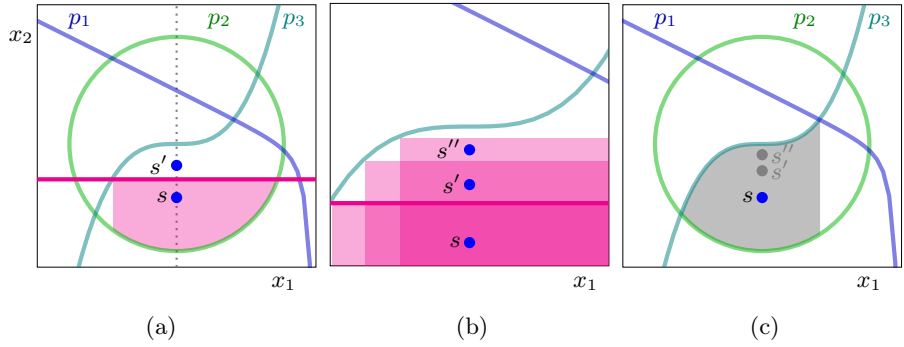
Note that \nlsat does not always run into this situation. It might also choose samples that are further away, thus escaping the conflict.
Thanks to the Boolean structure of the formula, the remaining search space might still be covered by approximated cells for other conflicts, which involve different sets of polynomials.

We can also express this with formal terms from \cite{JovanovicM12}: 
the termination of \nlsat relies on the fact that a \emph{finite basis} explanation function is used.
That is, for every input formula $\varphi$, there is a \emph{finite} set $B(\varphi)\subset \PP{n}$ such that in all possible runs of \nlsat, all explanations use only polynomials from $B$.
Since there are runs of \nlsat for which our explanation produces \emph{infinitely} many different literals, \apxscc{} does not yield a \emph{finite basis} explanation function.
\subsubsection*{Retaining Completeness.}
If we ensure that in every run of \nlsat, we add only finitely many auxiliary polynomials, then termination is guaranteed again.
The reasoning is that there will be a point after which the output of \apxscc{} is always equal to the original levelwise construction \texttt{levelwise-scc}.
Since this is a finite-basis explanation function, termination will be guaranteed.
\begin{lemma}[Termination]\label{lem:termination}
    A run of \nlsat using \apxscc{} terminates iff \texttt{apx-criteria(}$P_j, s_{[j]}$\texttt{)} returns \texttt{true} only finitely many times during that run.
\end{lemma}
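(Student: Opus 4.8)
The plan is to prove the two directions of the iff separately, using the ``finite basis'' characterization of termination of \nlsat from \cite{JovanovicM12} as the main lever. Recall that \nlsat terminates whenever the explanation function is \emph{finite basis}, i.e., for the given input formula $\varphi$ there is a finite set $B(\varphi) \subset \PP{n}$ containing all polynomials ever used in explanations along the run; conversely, the non-termination argument I would give is that an infinite run forces infinitely many distinct explanation literals and hence cannot have a finite basis.

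\textbf{($\Leftarrow$) If \texttt{apx-criteria} returns \texttt{true} only finitely often, the run terminates.} First I would observe that, along a fixed run, the polynomials handed to \apxscc{} in any explanation call come from a finite reservoir: \nlsat only ever reasons about the (finitely many) constraints occurring in $\varphi$ together with the atoms it has learned so far, and by the correctness theorem each explanation is realized by \texttt{levelwise-scc} on an augmented input $P \cup Q$, where $Q$ collects the auxiliary polynomials. The key point is that the original levelwise construction \emph{is} a finite-basis explanation function (this is the established fact we are allowed to assume, since \texttt{levelwise-scc} underlies the completeness of the unmodified \nlsat). Now, since \texttt{apx-criteria} returns \texttt{true} only finitely many times in this run, only finitely many auxiliary polynomials $Q$ are ever introduced. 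Hence after some finite prefix of the run, every explanation call has \texttt{apx-criteria} return \texttt{false} on all levels, so \apxscc{} behaves exactly like \texttt{levelwise-scc} (this is essentially the observation made in the proof of the correctness theorem: \apxscc{} equals \texttt{levelwise-scc} up to the added $Q_j$'s). Thus from that point on the explanations draw from the finite basis $B(\varphi)$ of the levelwise method, augmented by the finitely many auxiliary polynomials and the finitely many literals derived from them — still a finite set. Therefore the combined explanation function is finite-basis on this run, and the standard termination argument of \cite{JovanovicM12} applies: \nlsat performs only finitely many distinct decisions and conflict resolutions over a finite atom set, so the run terminates.

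\textbf{($\Rightarrow$) If the run terminates, \texttt{apx-criteria} returns \texttt{true} only finitely often.} This is the contrapositive of the easier direction: a terminating run is, by definition, a finite sequence of steps, so every subroutine — in particular \texttt{apx-criteria} — is invoked only finitely many times in total, a fortiori returning \texttt{true} only finitely often.

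\textbf{Main obstacle.} The delicate part is the ``$\Leftarrow$'' direction, specifically making rigorous the claim that once no more auxiliary polynomials are added, \apxscc{} and \texttt{levelwise-scc} produce literally the same explanations and hence inherit the finite basis. One must be careful that the \emph{already-introduced} auxiliary polynomials, and any literals $(x_j \sim \xi_*)$ built from them, together with the resultants and discriminants these spawn on lower levels, still form a finite set — which they do, since each auxiliary polynomial is linear and triggers only finitely many further projection polynomials, and the whole process over a finite reservoir of input polynomials closes up after finitely many steps. The second subtlety is invoking the finite-basis theorem of \cite{JovanovicM12} in the right form: it is stated for a fixed explanation function, whereas here the effective explanation function stabilizes only after a finite prefix; the clean fix is to note that termination depends only on the tail of the run, and the tail uses a genuine finite-basis function, so the cited result applies verbatim to that tail while the finite prefix contributes only finitely many steps.
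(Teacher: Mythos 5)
Your proof is correct and follows essentially the same route as the paper, which argues (informally, in the text just before the lemma) that finitely many \texttt{true} returns of \texttt{apx-criteria} mean only finitely many auxiliary polynomials are added, so after a finite prefix \apxscc{} coincides with \texttt{levelwise-scc}, a finite-basis explanation function in the sense of Jovanovi\'c and de Moura, whence termination; the converse direction is, as you note, immediate because a terminating run is finite. Your extra care that the finitely many auxiliary polynomials (and their projection closure and derived literals) only enlarge the basis by a finite set is a sensible elaboration of the same idea.
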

To make use of this lemma, we provide additional information to \texttt{apx-criteria}, like the number $n_{cells}$ of so far approximated cells.
Then, \texttt{apx-criteria} could return \texttt{false} whenever $n_{cells}$ exceeds some fixed threshold, fulfilling the condition from \Cref{lem:termination} and thus implying termination.

However, the optimal threshold will vary depending on the input formula.
A more flexible approach is to gradually strengthen the criteria as the number of approximated cells increases.
For example, we only add a polynomial if a cell boundary is defined by some $p \in P$ with $\dg{j}{p} \geq c\cdot n_{cells} + d$, where $c,d \in \QQ_{\geq 0}$.

This avoids the behaviour from \Cref{ex:incompleteness}, since $c\cdot n_{cells} + d$ eventually exceeds the degree of the involved polynomials, but other cells with more expensive resultants are still approximated.
Importantly, this only ensures termination because the added polynomials are linear; otherwise, the degrees of polynomials derived in the construction could grow indefinitely, always fulfilling the criterion.

The guarantee may also be given by other criteria, e.g. using individual counters for the involved polynomials.

\section{Variants}\label{sec:variants}
We now present several instantiations for \texttt{apx-polys}  and \texttt{apx-criteria}. The first three methods approximate a cell boundary:
Given the sample $s_{[j]}$ and some $p \in P$, whose root $\xi$ would be a bound of $I_j$, we construct some $p_*$ with a root $c \in \QQ$ between $s_j$ and $b:=\xi(s_{[j-1]})$.

\subsubsection*{Simple Approach.}
At the beginning of \Cref{sec:apx}, we already introduced the idea of adding polynomials $x_j - c$ defining a constant bound $c$ on $x_j$. We now elaborate on the choice of $c$. 
Although choosing $c$ close to $b$ restricts $I_j$ less, it may shrink the underlying cell depending on the shape of $p$'s variety.
This can be observed in \Cref{fig:incompleteness}: the closer the approximate bound is to the actual bound, the smaller becomes $I_1$ (the interval for $x_1$).
It is thus not immediately clear what to choose. 
More importantly, this approach can produce numbers $c = \sfrac{num}{den}$ with large bit size $log(num) + log(den)$, causing significant overhead in operations like substituting $c$ into high-degree polynomials.
Therefore, we choose $c$ with a minimal bit size using a method based on the \emph{Stern-Brocot tree} \cite{brocot1861,stern1858}.

The following two approaches try to provide better approximations of the cell boundary, hoping to increase the cell's quality for \nlsat.

\subsubsection*{Taylor.}
We want to construct $p_*$ so that its gradient at its root $(s_{[j-1]},c)$ is equal to the gradient of $p$ at its root $r := (s_{[j-1]},b)$:
\[
    p_*\quad =\quad \frac{\partial p}{\partial x_j}(r)\cdot (x_j - c) + \sum\nolimits_{k \in [j-1]}\Big(\frac{\partial p}{\partial x_k}(r)\cdot (x_k - s_k)\Big)
\]
This is a slight modification of the \emph{first-order Taylor expansion} of $p$ at $r$, the difference being that the constant term $p(r)$ is left out as it is always zero, and that $(x_j - c)$ is used instead of $(x_j - r_j)$ in the first term.
This ensures that $p_*$ has its root at $(s_{[j-1]},c)$ instead of $r$.
Now, clearly 
\[p_*(s_{[j-1]},c) = 0\ \text{and}\ \frac{\partial p_*}{\partial x_k}(s_{[j-1]},c) = \frac{\partial p}{\partial x_k}(r) \text{ for all } k \in [j].\]
The idea is that the root functions of $p$ and $p_*$ will behave similarly around $s_{[j-1]}$, as illustrated in \Cref{fig:variants-a}.
The dashed line is the root of the tangent to $p_3$ at $(s_{j-1}, b)$, which is then shifted to pass through $c$.

Unfortunately, if some sample coordinate $s_k$ is \emph{irrational}, then (1) we cannot use the term $(x_k - s_k)$, since we require \emph{rational} coefficients, and (2) the gradients $\sfrac{\partial p}{\partial x_k}(r)$ are harder to compute and might also be irrational.
We tackle (1) by omitting the summand corresponding to $x_k$ for the $k$ where $s_k$ is irrational, and (2) by finding rational approximations of the gradients: Each (irrational) algebraic number can be isolated using an open interval containing a single root of its defining polynomial; this interval can be refined arbitrarily by bisection.
For $k \in [j]$, let $g_k \approx \sfrac{\partial p}{\partial x_k}(r)$ be a rational approximation, then we get
\[
    p_*\quad =\quad g_j\cdot (x_j - c) + \sum_{k \in [j-1] \text{ s.t. } s_k \in \QQ}\Big(g_k\cdot (x_k - s_k)\Big)
\]

Both mitigations harm the quality of the approximation, as some gradients are only approximately equal, or not counted in at all.
\subsubsection*{Piecewise Linear.}
Instead of approximating the boundary at a single point, we can use piecewise linear interpolation.
For this purpose, we determine an interval $D \subseteq \RR$ around $s_{j-1}$ so that $p$ is delineable over
\[
    \{s_{[j-2]}\} \times D = \{(s_{[j-2]},s') \in \RR^{j-1} \mid s' \in D\} \subseteq \RR^{j-1},
\]
i.e. $\disc{p}{j}$ and certain coefficients of $p$ are sign-invariant over that set.
We now know that $\xi$ is a total continuous function over $\{s_{[j-2]}\}\times D$, which is needed for deriving a meaningful interpolation.

We then choose $k \in \NN$ support points $d_1 < \ldots < d_k$ from $D$ such that $s_{j-1}\in \{d_1,\ldots,d_k\}$, and for each $\ell\in [k]$ we compute the value $\xi((s_{[j-2]},d_{\ell}))$ of the root function at that support point (in practice, we only isolate it in a rational interval).
Each of those values is under-approximated by choosing a value $d_{\ell}' \in \QQ$ ``close'' to the boundary $\xi((s_{[j-2]},d_{\ell}))$ such that $(s_{[j-2]},d_{\ell},d'_{\ell}) \in \RR^j$ is inside the cell.
We require that one support point $d_{\ell}$ is equal to $s_{j-1}$, because this guarantees that $(s_{[j-2]},d_{\ell},d'_{\ell})$ (for this particular $\ell$) is between the bound and the sample point.

The approximate bound then consists of $k-1$ pieces, which connect the points $\{(s_{[j-2]},d_\ell,d'_\ell) \in \RR^j \mid \ell \in [k]\}$ and which are defined by the roots of
\[p_*^{(\ell)} := (d_{\ell+1} - d_{\ell})(x_j - d'_{\ell}) - (d'_{\ell+1} - d'_{\ell})(x_{j-1}-d_{\ell}), \quad \ell \in [k-1].\]
That is, $p_*$ is a piecewise function so that for $r \in \RR^j$ holds $p_*(r) = p_*^{(\ell)}(r)$ if $d_{\ell} \leq r \leq d_{\ell+1}$ (or $\ell = k$ if $d_k \leq r$).
We can encode this function by a minimum over maximum of linear functions as described in \cite{xu2014,xu2016}, and thus derive an \qfnra formula. Further, we can adapt the subsequent cell construction to handle such compound interval bounds by using the techniques described in \cite{merging}: We compute the resultant of each $p_*^{(\ell)}$ with the polynomials below or above the interval and filter out roots of the resultants witnessing spurious intersections.
This approach is illustrated in \Cref{fig:variants-b} where the intersections with the dashed lines are filtered out.

There are two cases where we cannot apply this approach (and thus need to apply e.g. the simple approach):
(1) Like the Taylor approach, if $s_{j-1}$ is irrational, we cannot construct $p_*$, because $s_{j-1} = d_{\ell}$ for some $\ell \in [k]$, and thus we cannot use the term $x_{j-1} - d_{\ell}$.
(2) If $D = \{s_{j-1}\}$, i.e. $D$ contains only one point, we cannot choose more than one support point.
However, at least two points are needed to construct one piece of the approximate boundary.
This case happens e.g. when the discriminant of $p$ (the original boundary-defining polynomial) has a root $\xi_{disc}$ at $s_{[j-1]}$, which means that the next level will collapse to a section $I_{j-1} = (x_{j-1} = \xi_{disc})$ anyway and the behaviour of $p_*$ around the sample does not matter as much.

\begin{figure}[t]
  \begin{subfigure}{0.35\textwidth}
    \begin{tikzpicture}[scale=1.425]
      \coordinateBox
      \setupExample
      \begin{scope}
        \clip (0,0) circle (1);
        \fill[magenta, opacity=0.33] (-0.6,-1) -- (-0.6,-0.3) -- (0.8,0.1) -- (0.8,-1) --cycle; 
      \end{scope}
      \draw[thick, gray, dotted] (0.25,-1.15) -- (0.25,1.3);
      \node[blue, circle, fill, inner sep=1.5pt, label={[left, yshift=-2]:$s$}] at (0.25,-0.5) {};
      \draw[domain=-1.3:1.3, samples=10, dashed, thick, magenta] plot ({\x},{(9*\x-1.5)/32});
      \draw[domain=-1.3:1.3, samples=10, ultra thick, magenta] plot ({\x},{(9*\x-1.5)/32 - 0.1});
    \end{tikzpicture}
    \caption{}\label{fig:variants-a}
  \end{subfigure}
  \begin{subfigure}{0.31\textwidth}
    \begin{tikzpicture}[scale=1.425]
        \draw[black, very thin] (-1.3,-1.15) rectangle (1.3,1.3);
        \node (x1) at (1,-1.3) {$x_1$};
      \setupExample
      \begin{scope}
        \clip (0,0) circle (1);
        \fill[magenta, opacity=0.33] (-0.75,-1) -- (-0.75,-0.64) -- (-0.6,-0.474) -- (-0.3,-0.1905) -- (0,-0.15) -- (0.3,-0.1095) -- (0.6,0.174) -- (0.6,-1) --cycle;
      \end{scope}
      \begin{scope}
        \clip (-1.3,-1.15) rectangle (1.3,1.3);
        \draw[ultra thick, magenta] (-0.3,-0.1905) -- (0,-0.15) -- (0.3,-0.1095);
        \draw[domain=-1.3:1.3, samples=10, thick, magenta, dashed] plot ({\x},{\x*(0.15 - 0.1095)/0.3 + (-0.15)});
        \draw[domain=-1.3:1.3, samples=10, thick, magenta, dashed] plot ({\x},{\x*(0.474 - 0.1905)/0.3 + (0.474 - 2*0.1905)});
        \draw[domain=-1.3:1.3, samples=10, thick, magenta, dashed] plot ({\x},{\x*(0.174 + 0.1095)/0.3 + (-2*0.1095-0.174)});
        \draw[domain=-1.3:-0.3, samples=10, ultra thick, magenta] plot ({\x},{\x*(0.474 - 0.1905)/0.3 + (0.474 - 2*0.1905)});
        \draw[domain=0.3:1.3, samples=10, ultra thick, magenta] plot ({\x},{\x*(0.174 + 0.1095)/0.3 + (-2*0.1095-0.174)});
      \end{scope}
      \draw[thick, gray, dotted] (0,-1.15) -- (0,0);
      \draw[thick, gray, dotted] (0.3,-1.15) -- (0.3,0.05);
      \draw[thick, gray, dotted] (-0.3,-1.15) -- (-0.3,0);
      \draw[thick, gray, dotted] (0.6,-1.15) -- (0.6,0.35);
      \draw[thick, gray, dotted] (-0.6,-1.15) -- (-0.6,-0.25);
      \node at (-0.6,-1.3) {$d_1$};
      \node at (0,-1.3) {$\cdots$};
      \node at (0.6,-1.3) {$d_k$};
      \node[blue, circle, fill, inner sep=1.5pt, label={[right, yshift=-2]:$s$}] at (0,-0.5) {};
    \end{tikzpicture}
    \caption{}\label{fig:variants-b}
  \end{subfigure}
  \begin{subfigure}{0.31\textwidth}
    \begin{tikzpicture}[scale=1.425]
        \draw[black, very thin] (-1.3,-1.15) rectangle (1.3,1.3);
        \node (x1) at (1,-1.3) {$x_1$};
        \begin{scope}
            \clip (0,0) circle (1);
            \fill[magenta, opacity=0.33, domain=-0.76:0.47]
            (-0.76,-1.3) -- (-0.76,-0.65) -- plot ({\x},{1.5*\x*\x*\x}) -- (0.47,0.23) -- (0.47,-1.3) --cycle;
        \end{scope}
        \projection{0.47}{0.15}
        \projection{0.65}{0.15}
      \setupExample
      \draw[thick, gray, dotted] (0,-1.15) -- (0,1.3);
      \node[blue, circle, fill, inner sep=1.5pt, label={[left, yshift=-2]:$s$}] at (0,-0.5) {};
      \draw[ultra thick, magenta] (-1.3,0.15) -- (1.3,0.15);
    \end{tikzpicture}
    \caption{}\label{fig:variants-c}
  \end{subfigure}
  \caption{%
    Variants of the modified construction: (a) using Taylor expansion, (b) using piecewise linear bounds, and (c) adding roots outside the cell.
  }\label{fig:variants}
\end{figure}
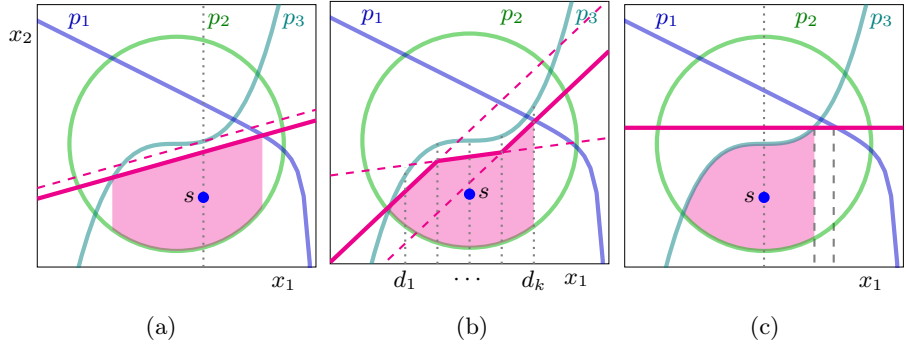
\subsubsection*{Roots Outside the Cell.}
We could also introduce polynomials $p_*$ with a root function $\xi_*$ between the cell boundary $\xi$ and some other root $\xi'$ outside the cell.
Instead of $\res{\xi.p}{\xi'.p}{j}$, we compute $\res{\xi.p}{p_*}{j}$ and $\res{p_*}{\xi'.p}{j}$ and then use transitivity, which may simplify the projection.
While this has the advantage of leaving the top level $I_i$ unchanged, the underlying cell may still be restricted by the new resultants.
This is illustrated in \Cref{fig:variants-c}.

\subsubsection*{Approximation Criteria.}
In addition to varying the way \emph{how} auxiliary polynomials are computed, we can also adapt \emph{when} they are computed, by using different instantiations of \texttt{apx-criteria}.

As our goal is to avoid expensive resultant computations, we are interested in quantities which influence the resultant complexity on the current level.
A simple (yet effective in practice) criterion depends on the degree of the boundary-defining polynomial $p \in \PP{i}$: the bound is approximated if $\dg{p}{i}$ exceeds a fixed threshold.
This has some limitations: (1) It can happen that \texttt{levelwise-scc} would not compute any (expensive) resultants with $p$, but this criterion would still advise inserting auxiliary polynomials.
While this defeats the purpose of avoiding expensive resultants, it is not completely useless as it still simplifies the cell description.
(2) Different thresholds will be optimal for different kinds of input problems, and it is not easy to guess a good one a priori.

Addressing the first issue, one can additionally check whether a resultant of $p$ with another nonlinear polynomial would be computed.
However, this did not improve the performance in our preliminary experiments.
The second issue is (partially) addressed with the dynamic termination criterion from \Cref{sec:incompleteness}, where the threshold for the degree grows with the number of approximated cells.

Instead of the degree $\dg{p}{i}$, we might also consider similar measures, e.g. the \emph{sum-of-total-degrees} of $p$'s monomials, which has been used for CAD projection orderings \cite{DBLP:conf/issac/DolzmannSS04}.

\subsubsection*{Transfer to CAlC.}
Similar to \nlsat, the \emph{cylindrical algebraic covering ({\footnotesize CAlC})} method \cite{AbrahamDEK21,covering} tries to extend a partial assignment $s \in \RR^i$ to a full model.
It uses the same theoretical framework as the levelwise construction to derive symbolic intervals to be excluded from the search.
In particular, it also ensures that certain root functions $\xi, \xi'$ do not intersect by making $\res{\xi.p}{\xi'.p}{i+1}$ order-invariant. 
We can apply our technique and dynamically introduce a polynomial with a root between $\xi,\xi'$, replacing an expensive resultant by two simpler ones, but shrinking the underlying cell.
This will introduce similar issues regarding termination, which can be solved in similar ways as presented above for SCC.

\section{Experiments}\label{sec:experiments}
We implemented several variants of our approach in the \smtrat solver \cite{corzilius_smt-rat_2015,noauthor_smt-rat_nodate}, allowing us to use its existing implementations of the levelwise SCC and \nlsat.
\smtrat also uses explanation backends based on the Fourier-Motzkin variable elimination (FM) \cite{FMSCC}, interval constraint propagation (ICP) \cite{KremerPhd}, and virtual substitution (VS) \cite{virtualS}, which are fast, but may fail to provide an explanation, especially for polynomials of degree 3 or higher.
The backends are called sequentially (FM, ICP, VS, SCC), so that the single cell construction is only needed when all other backends fail.
We compare the following variants for SCC:
\begin{description}
    \item[\texttt{Baseline}:] The original levelwise SCC.
    \item[\texttt{Simple-j}:] For $\texttt{j} \in \{3,4,5,6\}$, cell bounds defined by polynomials of degree \texttt{j} or higher are approximated by simple polynomials of the form $x_j - c$. Termination is ensured by limiting the number of approximated cells to $50$.
    \item[\texttt{Simple-*}:] a virtual best portfolio of the \texttt{Simple-j} variants.
    \item[\texttt{Dynamic}:] Uses the dynamic termination criterion from \Cref{sec:incompleteness}. A cell bound is approximated by a simple polynomial of the form $x_j - c$, if the defining polynomial has degree higher than or equal to $\sfrac{1}{5}\cdot n_{cells} + 3$, where $n_{cells}$ is the number of so far approximated cells.
    \item[\texttt{Taylor}:] Like \texttt{Dynamic}, but using Taylor approximations as in \Cref{sec:variants}.
    \item[\texttt{PWL-j}:] For $\texttt{j} \in \{2,4,6\}$, like \texttt{Dynamic}, but using piecewise linear approximations with \texttt{j} pieces, as presented in \Cref{sec:variants}.
    \item[\texttt{PWL-*}:] a virtual best portfolio of the \texttt{PWL-j} variants.
    \item[\texttt{Outside}:] If the lower (resp. upper) cell bound and another root function below (resp. above) that bound are defined by two non-linear polynomials, one of which fulfils the dynamic criterion, then we add a polynomial (of the form $x_j - c$) whose root lies outside the cell, between the bound and the other root, as explained in \Cref{sec:variants}.
\end{description}
Moreover, we also compare to the state-of-the-art \texttt{cvc5} \cite{cvc5} solver (version 1.2.1), which uses incremental linearization and cylindrical algebraic coverings.

We used the \qfnra benchmark set from \smtlib \cite{barrett_smt-lib_2010}, but only consider the $1684$ instances where \smtrat calls SCC at least once, since there is no difference between the variants on the other instances.
The tests were conducted on identical Intel\textregistered Xeon\textregistered 8468 Sapphire CPUs with 2.1 GHz per core, with a time limit of 1 minute and memory limit of 4GB per instance.

The results are summarized in \Cref{fig:results}.
While \texttt{cvc5} usually solves more instances than \smtrat on the entire \qfnra set, 
already the \texttt{Baseline} solver outperforms \texttt{cvc5} on our restricted benchmark set.

\paragraph*{Simple Approximations.}
Already the \texttt{Simple-j} variants solve around 45 instances more than \texttt{Baseline}.
However, these variants excel (partly) on different instances: the virtual portfolio \texttt{Simple-*} solves at least 18 instances more than each \texttt{Simple-j} variant.

\texttt{Dynamic} almost matches this performance, solving 58 instances more than \texttt{Baseline}, and it is in fact the best (non-portfolio) variant in our tests.
Interestingly, the differences in solved satisfiable instances and unsatisfiable instances are almost equal (30 more satisfiable, 28 more unsatisfiable).
This suggests that our explanations can not only help \nlsat find a model more quickly, but the Boolean structure of the unsatisfiable problems often still allows to cover the search space with the under-approximated cells.

\begin{figure}[h]
    \centering
    \includegraphics[scale=0.8]{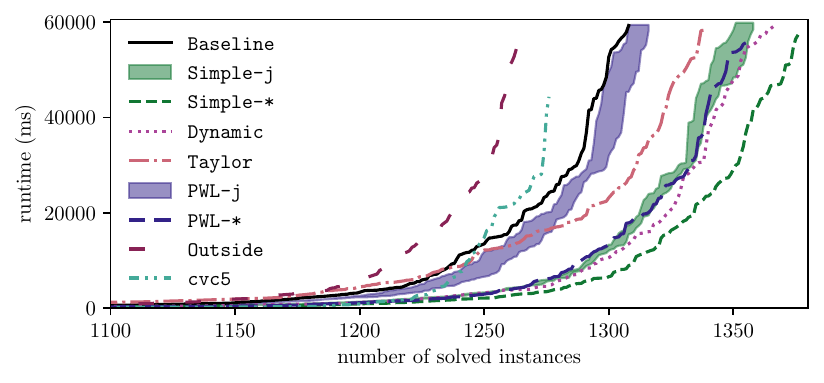}
    \caption{Performance profile. The purple area covers all \texttt{PWL-j} variants, the green area covers all \texttt{Simple-j} variants.}\label{fig:results}
\end{figure}

\Cref{fig:scatters} shows a more detailed comparison between \texttt{Dynamic} and \texttt{Baseline}.
Looking at the running times, \texttt{Dynamic} can solve many instances within one second, for which \texttt{Baseline} needs multiple seconds or even times out.
On the other hand, there are very few instances on which \texttt{Dynamic} is significantly slower, even though the number of SCC calls is usually much higher.
This is expected: the approximated cells can be computed faster, but they cover less of the search space.
Moreover, our approach often significantly reduces the maximum degree of any computed resultant.
However, we did not find a clear reduction of degrees of discriminants and coefficients.

\begin{figure}[t]
    \centering
    \includegraphics[scale=0.7]{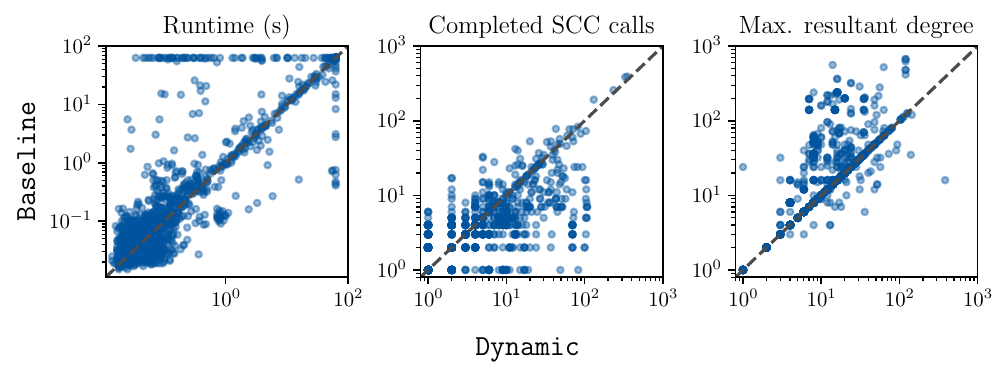}
    \caption{Scatter plots comparing \texttt{Dynamic} with \texttt{Baseline}}\label{fig:scatters}
\end{figure}

\paragraph*{Other Variants}
The \texttt{Taylor} and \texttt{PWL-j} variants performed slightly better than \texttt{Baseline}, but worse than the simple approximations.
It seems that the induced overhead outweighs potential benefits of better approximations.
In particular, our investigations often showed large bit sizes of the rational coefficients in the auxiliary polynomials and their resultants.
Another reason could be that the cell representation is less convenient, which is supported by the underwhelming performance of \texttt{Outside}, which solves fewer instances than \texttt{Baseline}.
While it does avoid some resultants like the other variants, it does not simplify the cell representation.
Accordingly, this seems to be a significant factor.

\paragraph*{Transfer to the CAlC Method.}
We also tested our modification in the context of the \emph{cylindrical algebraic covering (CAlC)} method \cite{covering}, which is implemented in \smtrat as well.
Now considering all 12154 instances of the QF\_NRA benchmark set, there is no significant difference between the baseline CAlC implementation (solving 9964 instances) and our modification (solving 9975 instances).
Once again, this might indicate that \nlsat especially benefits from simpler cell descriptions (which are not used in CAlC) and from compensating smaller cells by Boolean reasoning.

\paragraph*{A Note on Non-Termination.}
In \Cref{sec:incompleteness}, we showed that our approach can lead to non-termination with some sort of ``looping'' behaviour and presented ways to combat this.
Naturally, it would be interesting to know how often this occurs in practice.
As it is quite hard to reliably detect this behaviour, we have no concrete data.
However, when turning off the limit on approximated cells, the \texttt{Simple-j} variants time out significantly more often (solving around 40 instances less than with the limit), hinting at the importance of our termination criteria.
However, there are other influences in practice: When sampling new values, \smtrat prioritizes simple numbers, e.g. integers over rational numbers, and thus even the variants without the hard bound might (temporarily) escape looping when the gap between the actual cell boundary and the approximation becomes small. Then it may still time out in other parts of the computation or get caught in another loop.

\section{Conclusions}\label{sec:conclusion}

We modified the levelwise single cell construction for \nlsat and CAlC by dynamically inserting linear polynomials into the projection to avoid expensive high-degree polynomials in resultant computations and description of the resulting cell, at the cost of smaller cells - which are unterstood as ``under-approximations''. We introduced various variants to introduce such polynomials, as well as criteria to mitigate potential non-termination of \nlsat and CAlC.

In our experiments, our approach could significantly improve the running time of \nlsat. Interesingly, relatively simple under-aproximations performed best, while more complex approximations did not pay off, suggesting that the gains are mainly due to simpler root isolation in \nlsat.

There are several directions for further research:
Firstly, we conjecture that more intricate variants of \texttt{apx-criteria} may further improve efficiency.
This is not an easy task: In experiments not presented here, various other approaches based e.g. on the sum-of-total-degrees of the polynomials or involving the degrees of all polynomials on a level did not yield better results.
Secondly, our approach can only reduce computational effort for resultants, but not yet for discriminants, which have an even greater impact. 
Thirdly, the basic modification of the CAlC method had little impact, which might be improved. 
Finally, we guarantee termination of \nlsat by resorting to the original levelwise construction at some point.
Alternatively, we may consider $\delta$-completeness instead, as done e.g. in \cite{DREAL,ksmt}, and cover cells up to some precision $\delta$.

\subsubsection{Data Availability.}\label{data-avail}
Our implementation, experimental results, and tools for reproducing them are available at \url{https://doi.org/10.5281/zenodo.14916587}.

\begin{credits}
    \subsubsection{\ackname} J. Nalbach and V. Promies were supported by the DFG project SMT-ART (AB 461/9-1), and J. Nalbach and P. Wagner by the DFG RTG 2236 UnRAVeL.
    Computing resources were granted under the RWTH project rwth1560.
    
    \subsubsection{\discintname}
    The authors have no competing interests to declare that are relevant to the content of this article.
\end{credits}

\bibliography{references}
\end{document}